\newcommand{\reals}{\mathbb{R}}
\newcommand{\grad}{\mathrm{grad}}
\newtheorem{theorem}{Theorem}
\newtheorem{assumption}{Assumption}
\newtheorem{corollary}{Corollary}
\begin{document}

\title{On the scalability and convergence of simultaneous parameter
  identification and synchronization of dynamical systems}
\author{Bruno Nery \qquad Rodrigo Ventura \\
  Institute for Systems and Robotics \\
  Instituto Superior Técnico, Lisbon, Portugal}
\date{}
\maketitle

\begin{abstract}
  The synchronization of dynamical systems is a method that allows two
  systems to have identical state trajectories, appart from an error
  converging to zero. This method consists in an appropriate unidirectional coupling
  from one system (drive) to the other (response).
  This requires that the response system shares the same dynamical
  model with the drive.  For the cases where the drive is unknown,
  Chen proposed in 2002 a method to adapt the response system such
  that synchronization is achieved, provided that (1) the response
  dynamical model is linear with a vector of parameters, and (2) there
  is a parameter vector that makes both system dynamics identical.
  However, this method
  has two limitations: first, it does not scale well for complex
  parametric models (e.g., if the number of parameters is greater than
  the state dimension), and second, the model parameters are not
  guaranteed to converge, namely as the synchronization error
  approaches zero. This paper presents an adaptation law 
  addressing these two
  limitations. Stability and convergence proofs, using Lyapunov's
  second method, support the proposed adaptation law. Finally,
  numerical simulations illustrate the advantages of the proposed method, namely
  showing cases where the Chen's method fail, while the proposed one does not.
\end{abstract}

\section{Introduction}
\label{sec:introduction}

Consider two identical continuous time dynamical systems,
designated \emph{drive} (D) and \emph{response} (R). It is well known
that 
the state evolution of each system, when taken separately,
may differ radically if the initial condition for each system differ, namely in the case
of chaotic dynamical systems~\cite{lorenz1963deterministic,grebogi87}.
However, in the presence of a
unidirectional coupling from the drive to
the response system, synchronization of their state trajectories is known
to occur~\cite{rulkov95,kocarev96,pecora97}. In this paper we limit the discussion to
the simplest coupling scheme, in which the response system
receives the full state vector from the drive. In this situation it is
easy to design a controller that synchronizes both systems, using
feedback linearization (Section~\ref{sec:problem-statement}).


Such synchronization assumes that both drive and response have the
same dynamical model. This paper addresses the problem of achieving
synchronization of a response system, when the dynamical model of the
drive is unknown. In particular, we target the problem of
\emph{simultaneous adaptation and synchronization of a response
  system, given an unknown drive.}  Two assumptions are made: (1)~the
response dynamical model depends linearly on a parameter vector, and
(2)~there is a value for this vector that makes both systems
identical. In~2002, Chen and Lü proposed a method to simultaneously
adapt this parameter vector and to make both systems
synchronized~\cite{chen02}. Lyapunov second method was used to prove
the feasibility of this method, however, due to the construction of
the Lyapunov function employed, convergence of the response parameters
is not guaranteed. This has two consequences that prevent the general
usage of this method.  Firstly, it does not scale in complexity: if
the dimension of the parameter vector is greater than the dimension of
the state vector, convergence is not guaranteed. And secondly, even
with a small number of parameters, Chen's proof does not guarantee
effective convergence of the parameters.

In this paper we address both of these problems, presenting a
convergence proof for the simultaneous synchronization and adaptation
of the response to an arbitrary drive system. Moreover, numerical
simulations comparing the proposed approach with Chen's method
illustrate the benefits of the approach.


Chaotic synchronization was first introduced by Pecora and Carrol in
1990~\cite{pecora90}. Since then, many publications have deepend our
knowledge about this
concept~\cite{afraimovich86,kocarev96,pecora97,agiza01}. A method for
synchronizing the Rössler and the Chen chaotic systems using active
control was proposed by Agiza and Yassen~\cite{agiza01}, however the
approach is specific to these particular systems. Chen and Lü proposed
a method to perform simultaneous identification and synchronization of
chaotic systems~\cite{chen02}, but the results show some limitations,
which are discussed in length and addressed in this paper.

The paper is structured a follows: section~\ref{sec:problem-statement}
states formally the problem, followed by the proposed solution in
section~\ref{sec:proposed-solution}; experimental results are presented
in section~\ref{sec:experimental-results}, and
section~\ref{sec:conclusions} concludes the paper.

\section{Problem statement}
\label{sec:problem-statement}

Consider two dynamical systems, called \emph{drive} and
\emph{response}, with a unidirectional coupling between them.
Throughout
this paper we will assume that both drive and response systems are
identical, apart from a parameter vector, which is unknown. The goal of the adaptation law is to determine this
parameter vector. Consider the drive system modeled by
\begin{equation}
  \label{eq:1}
  \dot{x} = f(x) + F(x)\,\theta,
\end{equation}
where $x(t)\in\reals^n$ is the state vector, and $\theta\in\reals^m$
is a parameter vector. The nonlinear functions that support the model
are $f:\reals^n\rightarrow\reals^n$ and
$F:\reals^n\rightarrow\reals^{(n\times m)}$.
The coupling between the drive and the response systems consists in a
bias term, called \emph{synchronization input,} from the drive to the response.
The response system is
identical to the drive, except for the parameter vector~$\alpha\in\reals^m$ and the synchronization input~U,
\begin{equation}
  \label{eq:2}
  \dot{y} = f(y) + F(y)\,\alpha + U(y,x,\alpha),
\end{equation}
where $y(t)\in\reals^n$ is the response state vector, and
$U:\reals^n\times\reals^n\times\reals^m\rightarrow\reals^n$
is the synchronization control function. This function $U$ realizes the
controller that, given the state input from the drive, synchronizes
the response system.

Define the state error $e=y-x$ and the parameter error
$\Delta=\alpha-\theta$; the simultaneous adaptation and
synchronization problem consists in the design of a controller $U$ and
of a parameter adaptation law for $\alpha$ such that both
$\lim_{t\rightarrow\infty} e(t)=0$ and $\lim_{t\rightarrow\infty}
\Delta(t)=0$.

Chen proposes in~\cite{chen02} a solution to this problem
in the form of an adaptation law for $\alpha$.
\begin{assumption}
  There is a
  controller $U$ and a scalar function $V(e)$ that, for
  $\alpha=\theta$, satisfies both (i)~$c_1||e||^2\leq V(e)\leq
  c_2||e||^2$ and (ii)~$\dot{V}(e)\leq -W(e)$, for $c_1,c_2$ positive
  constants, $W(e)$ a positive definite function,
  and
  $U(x,x,u,\theta)=0$. 
\end{assumption}
\noindent
For example, the controller
\begin{equation}
  \label{eq:14}
  U(y,x,\theta) = -e+f(x)-f(y)+\left[F(x)-F(y)\right]\theta,
\end{equation}
and the function $V(e)=\frac{1}{2}e^Te$ satisfy this assumption. 
\begin{theorem}
  Under Assumption~1, the adaptation law
  \begin{equation}
    \label{eq:12}
    \dot{\alpha} = -F^T(x)\left[\grad\,V(e) \right]^T
  \end{equation}
  stabilizes the system at the equilibrium point $e=0$, $\alpha=\theta$.
\end{theorem}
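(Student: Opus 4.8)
The plan is to build a composite Lyapunov function combining the synchronization Lyapunov function $V(e)$ supplied by Assumption~1 with a quadratic penalty on the parameter error $\Delta=\alpha-\theta$. First I would note that $\dot\Delta=\dot\alpha$ since $\theta$ is constant, and then compute the error dynamics by subtracting \eqref{eq:1} from \eqref{eq:2}. Using the controller from Assumption~1 --- concretely \eqref{eq:14}, but evaluated at the response's own estimate $\alpha$ --- the unknown parameter enters in a single clean way, $\dot e=-e+F(x)\,\Delta$; more generally the controller is arranged so that $\dot e$ equals its value at $\alpha=\theta$ (for which Assumption~1(ii) applies) plus the term $F(x)\,\Delta$. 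This identity is the crux of the argument, and the specific shape of the adaptation law \eqref{eq:12} is tailored precisely to exploit it.

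Next I would propose $V_c(e,\Delta)=V(e)+\frac{1}{2}\Delta^T\Delta$ as a Lyapunov candidate for the augmented state $(e,\Delta)$. By Assumption~1(i), $c_1\|e\|^2\leq V(e)$, so $V_c$ is positive definite and radially unbounded, hence a legitimate candidate. Differentiating along trajectories gives $\dot V_c=[\grad V(e)]\,\dot e+\Delta^T\dot\alpha$. Substituting the error dynamics together with \eqref{eq:12}, the cross term $[\grad V(e)]F(x)\Delta$ produced by $\dot e$ is cancelled exactly by $\Delta^T\dot\alpha=-\Delta^TF^T(x)[\grad V(e)]^T=-[\grad V(e)]F(x)\Delta$, the latter being a scalar equal to its own transpose. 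What survives is $\dot V(e)$ evaluated as if $\alpha=\theta$, which by Assumption~1(ii) is bounded above by $-W(e)\leq 0$.

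From $\dot V_c\leq -W(e)\leq 0$, Lyapunov's second method yields stability of the equilibrium $(e,\Delta)=(0,0)$ and boundedness of all trajectories; since $V_c$ is non-increasing and bounded below it converges, so $\int_0^\infty W(e(t))\,dt<\infty$, and because $e$ (hence $\dot e$, hence $W(e)$) is bounded and uniformly continuous, Barbalat's lemma forces $W(e(t))\to 0$ and therefore $e(t)\to 0$; equivalently one may invoke LaSalle's invariance principle on the largest invariant set contained in $\{W(e)=0\}$.

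I expect the main obstacle to be exactly the cancellation step: it works only because $\Delta$ enters $\dot e$ through $F(x)\,\Delta$ and the gain in \eqref{eq:12} is the matching $F^T(x)$, so I would be careful about the transpose/row-vector conventions for $\grad V$ (treated as a row vector, as in \eqref{eq:12}) and about verifying that $U$ has the required affine dependence on $\alpha$. The second point worth stressing is what the argument does \emph{not} deliver: since $\dot V_c$ is only negative semidefinite in $(e,\Delta)$ --- it carries no term in $\Delta$ --- the proof gives no rate and no guarantee that $\Delta(t)\to 0$. That gap is precisely the limitation this paper sets out to remove.
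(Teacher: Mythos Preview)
Your proposal is correct and follows essentially the same route as the paper (which defers the proof to \cite{chen02} but describes its structure): the composite Lyapunov function $V_1(e,\alpha)=\frac{1}{2}e^Te+\frac{1}{2}\Delta^T\Delta$ of~\eqref{eq:13} is exactly your $V_c$ specialized to $V(e)=\frac{1}{2}e^Te$, and the cancellation you identify relies on precisely the ``hidden assumption'' $U(y,x,\alpha)-U(y,x,\theta)=[F(x)-F(y)]\Delta$ that the paper flags as implicit in Chen's argument. Your closing remark---that the proof yields only $\dot V_c\le -W(e)$, hence Lyapunov stability and $e(t)\to 0$ but no guarantee on $\Delta(t)\to 0$---is the very limitation the paper goes on to address.
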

\begin{proof}
  See~\cite{chen02}.
\end{proof}
\noindent
In the proof of this
theorem, Chen employs the Lyapunov function
\begin{equation}
  \label{eq:13}
  V_1(e,\alpha) = \frac{1}{2}e^Te + \frac{1}{2}\Delta^T\Delta.
\end{equation}
There is an hidden assumption in the proof: it
only holds if
$U(y,x,\alpha)-U(y,x,\theta)=\left[F(x)-F(y)\right]\Delta$
(which is true if controller~\eqref{eq:14} is used).

Still, two problems remain that compromise the applicability of this
result. The first one is that~\eqref{eq:12} 
does not guarantee strict definite positiveness of
$-\dot{V}_1$; in particular,
$\dot{V}_1(0,\alpha)=0$ for all values of $\alpha$. This means that, as the synchronization error $e$
approaches zero, the magnitude of the parameter error $\Delta$
is not guaranteed to decrease.
The second problem concerns the null space of
$F^T(x)$: according to~\eqref{eq:12}, the parameter vector $\alpha$
remains changed, as long as $\grad\,V(e)$ lies in the null
space of $F^T(x)$. Taking for instance $V(e)=\frac{1}{2}e^Te$, while the state error
$e$ lies in this null space, the parameter vector
$\alpha$ remains unchanged, even if $\alpha\neq\theta$.

\section{Proposed solution}
\label{sec:proposed-solution}

Let us first obtain a controller function $U$ that achieves
synchronization, assuming that the true
value of the parameter vector is known, $\alpha=\theta$. In this situation, the error
state $e$ has the following dynamics
\begin{equation}
  \label{eq:3}
  \dot{e}=f(y)-f(x) + \left[ F(y)-F(x) \right]\,\theta + U(y,x,\theta).
\end{equation}
Considering now the positive definite Lyapunov function
\begin{equation}
  \label{eq:4}
  V(e) = \frac{1}{2}\,e^Te,
\end{equation}
its time derivative is $\dot{V}=e^T\dot{e}$. Taking the controller
\begin{equation}
  \label{eq:6}
  U(y,x,\theta) = - K\,e - f(y) + f(x) - \left[ F(y) - F(x)
  \right]\,\theta,
\end{equation}
where $K$ is a $(n\times n)$ positive definite matrix, we have that
$\dot{e}=-K\,e$.  Matrix $K$ is thus related with the synchronization
rate.  Since $-\dot{V}=e^TK\,e$ is a positive definite function, for a
positive definite $K$, system \eqref{eq:3} is globally uniformly
asymptotically stable~\cite{sastry99} at the equilibrium point
$e=0$. Note that this controller satisfies Chen's Assumption referred
in the previous section.

Consider now the positive definite Lyapunov function
\begin{equation}
  \label{eq:7}
  V(e,\Delta) = \frac{1}{2}\,e^Te + \frac{1}{2}\,\Delta^T\Delta.
\end{equation}
This function is zero if and only if both the response is synchronized
with the drive, and its parameters equal the drive ones.
The dynamics of the error $e$, while using the controller
\eqref{eq:6}, is then
\begin{equation}
  \label{eq:8}
  \dot{e} = - K\,e + F(x)\,\Delta.
\end{equation}
By left multiplying this equation by $F^T(x)\,L$, where $L$ is a $(n\times
n)$ positive definite matrix (which is related with the adaptation
rate; see below), and transposing the result, one gets the relation
\begin{equation}
  \label{eq:9}
  \Delta^TF^T(x)\,L^T\,F(x) = \dot{e}^TL^TF(x) + e^TK^TL^TF(x).
\end{equation}
We are now in condition to prove the main result of this paper:
\begin{theorem}
  \label{thm:2}
  Assuming that there is a constant matrix $L$ such that
  $G(x)=F^T(x)\,L^T\,F(x)$ is positive definite for all $x$,
  the adaptation law
  \begin{equation}
    \label{eq:10}
    \dot{\alpha} = - F^T(x)\,\left[ \left(L\,K+I\right)\,e + L\,\dot{e} \right],
  \end{equation}
  together with the controller~\eqref{eq:6}, globally uniformly
  stabilizes both the error system~\eqref{eq:8} at~$e=0$, and the
  parameter error at~$\Delta=0$.
\end{theorem}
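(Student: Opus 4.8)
The plan is to use the Lyapunov function $V(e,\Delta)$ from~\eqref{eq:7} and show that the proposed adaptation law makes $\dot V$ negative definite in $(e,\Delta)$, which by standard Lyapunov theory~\cite{sastry99} yields global uniform asymptotic stability of the joint equilibrium $(e,\Delta)=(0,0)$. Let me sketch the key steps.

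First, I would differentiate $V$ along trajectories:
$$\dot V = e^T\dot e + \Delta^T\dot\Delta.$$
For the first term, substitute the error dynamics~\eqref{eq:8}: $e^T\dot e = -e^TKe + e^TF(x)\Delta$. For the second term, note that $\dot\Delta = \dot\alpha - \dot\theta = \dot\alpha$ (since $\theta$ is constant), so I plug in the adaptation law~\eqref{eq:10}:
$$\Delta^T\dot\Delta = -\Delta^T F^T(x)\big[(LK+I)e + L\dot e\big].$$

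Second — and this is where relation~\eqref{eq:9} does the real work — I would rewrite that last expression. Expanding, $\Delta^T F^T(x)(LK+I)e = \Delta^T F^T(x) LK e + \Delta^T F^T(x) e$, and $\Delta^T F^T(x) L\dot e$ is exactly the transpose of the right-hand side of~\eqref{eq:9} minus the $e^TK^TL^TF(x)$ piece; more directly, \eqref{eq:9} states $\Delta^T G(x)\Delta = \Delta^T F^T(x)L\dot e + \Delta^T F^T(x) LKe$ after substituting $\dot e$ from~\eqref{eq:8} into $F^T(x)L\dot e$. Hence the $LKe$ and $L\dot e$ contributions combine to $\Delta^T G(x)\Delta$, leaving
$$\Delta^T\dot\Delta = -\Delta^T G(x)\Delta - \Delta^T F^T(x) e.$$
Adding the two pieces, the cross terms $e^TF(x)\Delta$ and $\Delta^TF^T(x)e$ cancel (they are scalar transposes of each other), giving the clean identity
$$\dot V = -e^TKe - \Delta^T G(x)\Delta.$$

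Third, I would invoke the hypotheses: $K$ is positive definite and $G(x)=F^T(x)L^TF(x)$ is positive definite for all $x$ by assumption, so $\dot V$ is negative definite in $(e,\Delta)$, while $V$ is positive definite and radially unbounded. Standard Lyapunov theory then gives global uniform asymptotic stability of $(e,\Delta)=(0,0)$, i.e. $e(t)\to0$ and $\Delta(t)\to0$. The main obstacle is purely bookkeeping: correctly substituting $\dot e$ into~\eqref{eq:9} and tracking which terms in the adaptation law produce the $\Delta^T G(x)\Delta$ term versus the residual cross term, so that everything telescopes. One should also double-check that writing $\dot\alpha$ in~\eqref{eq:10} as a function of $\dot e$ is not circular — since $\dot e$ here is the \emph{measured} error derivative (equivalently one can eliminate $\dot e$ via~\eqref{eq:8} to get $\dot\alpha$ purely in terms of $e$, $x$, and $\Delta$, but $\Delta$ is unknown, so the $\dot e$ form is what is actually implementable).
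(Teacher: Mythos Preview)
Your proposal is correct and follows essentially the same route as the paper: both use the Lyapunov function~\eqref{eq:7}, substitute the adaptation law~\eqref{eq:10} together with the relation~\eqref{eq:9}, and arrive at $\dot V = -e^TKe - \Delta^TG(x)\Delta$, from which global uniform asymptotic stability follows. Your version simply makes explicit the cancellation of the cross terms $e^TF(x)\Delta$ and $\Delta^TF^T(x)e$ that the paper leaves implicit.
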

\begin{proof}
Considering the Lyapunov function~\eqref{eq:7},
we have $\dot{V}=e^T\dot{e}+\Delta^T\dot\Delta$.
Taking the adaptation law~\eqref{eq:10}
together with~\eqref{eq:9}, while noting that $\dot\Delta=\dot\alpha$, one obtains
\begin{equation}
  \label{eq:11}
  \dot{V} = - e^TK\,e - \Delta^TG(x)\Delta.
\end{equation}
Since $G(x)$ is assumed positive definite,
$-\dot{V}$ is also positive definite, from which we can conclude that
$(e,\Delta)=(0,0)$ is a globally uniformly asymptotically
stable~\cite{sastry99} equilibrium point of~\eqref{eq:8}.
\end{proof}
This theorem implies both synchronization ($y=x$) and correct
identification of the parameters ($\alpha=\theta$). Note that the
practical use of the proposed adaptation law~\eqref{eq:10} requires
knowledge of the error time derivative $\dot{e}$, which in principle
can be obtained (or estimated) from the error evolution.

The choice of the constant matrices $K$ and $L$ have impact on the
convergence rate. If $\alpha=\theta$, the error system is $\dot{e} = -
K\,e$, meaning that the error decreases asymptotically to zero
according to a first-order linear dynamics with a time constant
determined by~$K$. If $e=0$, the parameter error has the dynamics
$\dot{\Delta}=-F^T(x)\,L\,F(x)\,\Delta$, and thus the magnitude of $L$
impacts on the convergence rate of the parameters.  Simple choices
for $K$ and $L$ are diagonal matrices with constant values, $K=k\,I$
and $L=l\,I$, for $k$ and $l$ two positive scalars. Thus, the state and
parameter error dynamics become $\dot{e}=-k\,e$ and
$\dot{\Delta}=-l\,F^T(x)\,F(x)\,\Delta$.

Since $F(x)$ is a $(n\times m)$ matrix, its rank is lower or equal
to $\min(n,m)$, and thus the rank of $G(x)$ is also lower and equal
to $\min(n,m)$. However, in order for $G(x)$ to be positive
definite, its rank has to be equal to $m$ (the dimension of the
parameter vector $\theta$), and thus $n\ge m$ is a necessary condition
for $G(x)$ to be full rank. This means that there is an upper bound to
the amount or parameters $m$, in order for convergence to be
guaranteed.  This largely limits the flexibility of the response
system to adapt to arbitrary drive systems, in particular with a large
amount of parameters.

To tackle this problem we propose augmenting the $F(x)$ matrix
with extra rows, as many as needed, in order for $G(x)$ to become full
rank. First, let us designate by $x^*(t)$ a new state vector
consisting in the concatenation of time delayed versions of the
original state vector $x(t)$,
\begin{equation}
  \label{eq:15}
  x^* = \left[\; x_0 \; x_1 \;\cdots\; x_r \;\right]^T,
\end{equation}
where $x_i(t)=x(t-i\,\delta)$, for $i=1\ldots r$ and a $\delta>0$. Using this state
vector, the drive system becomes
\begin{equation}
  \label{eq:16}
  \dot{x}^* = f^*(x^*) + F^*(x^*)\,\theta,
\end{equation}
where
\begin{equation}
  \label{eq:17}
  f^*(x^*)=
  \left[\begin{array}{c}
      f(x_0) \\ \vdots \\ f(x_r)
    \end{array}\right]
  \qquad\mathrm{and}\qquad
  F^*(x^*)=
  \left[\begin{array}{c}
      F(x_0) \\ \vdots \\ F(x_r)
    \end{array}\right].
\end{equation}
This augmented system is equivalent to~\eqref{eq:1}, as the additional
state dimension corresponds to time delayed versions of the original
system.  The response system, with state vector
$y^*\in\reals^{(r+1)n}$ takes the form
\begin{equation}
  \label{eq:18}
  \dot{y}^* = f^*(y^*) + F^*(y^*)\,\alpha + U^*(y^*,x^*,\alpha).
\end{equation}
These two coupled systems~\eqref{eq:16} and~\eqref{eq:18}
with state vectors $x^*$ and $y^*$ can be
viewed as a new pair of drive and response systems by themselves, with error
vector $e^*=y^*-x^*$. Thus, the results obtained above can be directly
applied here: the synchronization controller becomes
\begin{equation}
  \label{eq:5}
  U^*(y^*,x^*,\alpha) = - K^*\,e^* - f^*(y^*) + f^*(x^*) - \left[ F^*(y^*) - F^*(x^*)
  \right]\,\alpha,
\end{equation}
where the matrix $K^*$ can be a $((r+1)n\times (r+1)n)$ block diagonal formed by $K$ matrices,
\begin{equation}
  \label{eq:19}
  K^*=
  \left[\begin{array}{c|c|c}
      K & 0 & 0 \\
      \hline
      0 & K & 0 \\
      \hline
      0 & 0 & \ddots \\
    \end{array}\right].
\end{equation}
The adaptation law becomes then
\begin{equation}
  \label{eq:20}
  \dot{\alpha} = - {F^*}^T(x^*)\,\left[ \left(L^*\,K^*+I\right)\,e^* + L^*\,\dot{e}^* \right],
\end{equation}
where $L^*$ is a $((r+1)n\times (r+1)n)$  matrix, which can also take the form of
a block diagonal in the same fashion as $K^*$ above,
\begin{equation}
  \label{eq:22}
  L^*=
  \left[\begin{array}{c|c|c}
      L & 0 & 0 \\
      \hline
      0 & L & 0 \\
      \hline
      0 & 0 & \ddots \\
    \end{array}\right].  
\end{equation}
If both $K^*$ and $L^*$ have the block diagonal structure as in~\eqref{eq:19} and~\eqref{eq:22}, the adaptation
law~\eqref{eq:20} can be simplified into
\begin{equation}
  \label{eq:21}
 \dot{\alpha} = -\sum_{i=0}^r  F^T(x_i) \,
 [(L\,K+I)\,e_i + \dot{e}_i ],
\end{equation}
where $e_i=y_i-x_i$ and $\dot{e}_i=\dot{y}_i-\dot{x}_i$.

With the above augmented system, we can prove convergence when $n<m$
with the following Corollary:
\begin{corollary}
  \label{cor:1}
  If matrix $G^*(x)=(F^*)(x)^T\,(L^*)^T\,F^*(x)$ is full rank for all
  $x$, then the response system~\eqref{eq:18}, together with the
  adaptation law~\eqref{eq:20}, globally uniformly stabilizes both the
  error system~\eqref{eq:8} at~$e=0$, and the parameter error
  at~$\Delta=0$.
\end{corollary}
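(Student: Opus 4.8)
The plan is to recognize that the augmented pair~\eqref{eq:16}--\eqref{eq:18} is, structurally, \emph{exactly} an instance of the original drive/response problem, so that Theorem~\ref{thm:2} applies verbatim. First I would verify that the time-delayed construction is self-consistent: differentiating $x_i(t)=x(t-i\delta)$ and substituting~\eqref{eq:1} gives $\dot{x}_i = f(x_i)+F(x_i)\,\theta$, and stacking these blocks reproduces~\eqref{eq:16} with $f^*,F^*$ as in~\eqref{eq:17}. Hence $(x^*,y^*)$ is a drive/response pair of the form~\eqref{eq:1}--\eqref{eq:2}, but now with state dimension $(r+1)n$, nonlinearities $f^*$ and $F^*$, and the same parameter vectors $\theta,\alpha$ --- so the parameter error $\Delta=\alpha-\theta$ is unchanged.

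Next I would match the three ingredients needed by Theorem~\ref{thm:2}. The controller~\eqref{eq:5} is precisely~\eqref{eq:6} written for $(x^*,y^*)$ with gain $K^*$; the adaptation law~\eqref{eq:20} is precisely~\eqref{eq:10} written for $(x^*,y^*)$ with matrix $L^*$; and the standing hypothesis asks for a constant matrix --- here $L^*$ --- making $G^*(x^*)=(F^*)^T(x^*)\,(L^*)^T\,F^*(x^*)$ positive definite for all $x^*$. Since $L^*$ is positive definite, $\Delta^T G^*(x^*)\,\Delta=\big(F^*(x^*)\Delta\big)^T(L^*)^T\big(F^*(x^*)\Delta\big)\ge 0$ with equality only when $F^*(x^*)\Delta=0$; thus the full-rank hypothesis on $G^*$ (equivalently, full column rank of $F^*$) is exactly what promotes this quadratic form to positive definiteness. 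Invoking Theorem~\ref{thm:2} for the augmented system then yields $\dot V = -\,{e^*}^{T}K^*e^* - \Delta^{T}G^*(x^*)\,\Delta$, negative definite in $(e^*,\Delta)$, so $(e^*,\Delta)=(0,0)$ is globally uniformly asymptotically stable.

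Finally I would transfer the conclusion back: because the undelayed copy $e_0=y-x=e$ is one of the blocks of $e^*$, convergence $e^*\to 0$ forces $e\to 0$, while $\Delta\to 0$ is already the parameter statement. The payoff of the construction is that the necessary condition ``state dimension $\ge$ number of parameters'' has become $(r+1)n\ge m$, which can always be arranged by taking $r$ large enough, thereby removing the $n<m$ obstruction of Theorem~\ref{thm:2}.

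I expect the only non-routine point to be the passage from ``$G^*$ full rank'' to ``$G^*$ positive definite'': one must make explicit that the block-diagonal choice~\eqref{eq:22} keeps $L^*$ symmetric positive definite, and must accept that full column rank of $F^*(x^*)$ at \emph{every} $x^*$ is a genuine requirement on the model --- i.e., the delayed regressor blocks $F(x_1),\dots,F(x_r)$ must actually enrich the column space of $F(x_0)$ rather than being redundant. Everything else in the argument is a direct transcription of the already-established Theorem~\ref{thm:2}.
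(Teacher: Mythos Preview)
Your proposal is correct and follows exactly the paper's approach: recognize that the augmented drive/response pair~\eqref{eq:16}--\eqref{eq:18} with controller~\eqref{eq:5} and law~\eqref{eq:20} is an instance of the setup of Theorem~\ref{thm:2}, and invoke that theorem. The paper's proof is a one-liner to this effect; you additionally spell out the passage from ``$G^*$ full rank'' to ``$G^*$ positive definite'' via positive definiteness of $L^*$, which the paper leaves implicit.
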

\begin{proof}
  The equivalent drive~\eqref{eq:16} and the response~\eqref{eq:18}
  systems satisfy the conditions of Theorem~\ref{thm:2}, as long as
  $G^*(x)$ is full rank.
\end{proof}
\noindent
The rank of $G^*(x)$ cannot be guaranteed \emph{a priori,} but a
necessary condition Corollary can still be stated:
\begin{corollary}
  \label{cor:2}
  If $F$ has rank $n<m$, then $r\ge\lceil\frac{m}{n}-1\rceil$ is
  a necessary condition for $G^*$ to be full rank.
\end{corollary}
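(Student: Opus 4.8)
The plan is to reduce the full-rank requirement on $G^*$ to a pure dimension count on the block-stacked matrix $F^*$, and then to convert the resulting inequality into the stated bound on $r$ by elementary ceiling arithmetic. Since $G^*(x)=(F^*)^T(x)\,(L^*)^T\,F^*(x)$ is a square $m\times m$ matrix, ``$G^*$ full rank'' means $\operatorname{rank}G^*(x)=m$, so the task is to exhibit a necessary condition for that equality to be attainable.

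First I would note that $G^*$ is a matrix product having $(F^*)^T$ as a factor, hence $\operatorname{rank}G^*(x)\le\operatorname{rank}F^*(x)$; this step uses nothing about $L^*$, so it is valid even without appealing to its positive definiteness. Next, from the block definition~\eqref{eq:17}, $F^*(x^*)$ is the vertical stacking of the $r+1$ blocks $F(x_0),\dots,F(x_r)$, each $n\times m$ of rank at most $n$, so $\operatorname{rank}F^*(x^*)\le(r+1)n$ --- already the crude ``$(r+1)n$ rows'' bound suffices. Chaining these inequalities, if $G^*$ is full rank then $m=\operatorname{rank}G^*(x)\le(r+1)n$. It remains to solve $(r+1)n\ge m$ over the integers: this gives $r\ge\frac{m}{n}-1$, and since $r$ is an integer, $r\ge\big\lceil\frac{m}{n}-1\big\rceil$ (equivalently $\lceil m/n\rceil-1$), which is the claimed condition. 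For the contrapositive: if $r<\lceil m/n-1\rceil$ then $r+1\le\lceil m/n\rceil-1<m/n$, so $(r+1)n<m$ and $G^*$ cannot attain rank $m$.

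There is no real obstacle in this argument; the only points needing care are (i) phrasing the rank inequality $\operatorname{rank}G^*\le\operatorname{rank}F^*$ so that it does not secretly rely on symmetry or definiteness of $L^*$, and (ii) getting the ceiling manipulation exactly right, using $\lceil m/n-1\rceil=\lceil m/n\rceil-1$ because $1\in\mathbb{Z}$. I would also add a closing remark that the condition is necessary but not sufficient: even with $r+1\ge m/n$ delayed copies, the blocks $F(x_i)$ need not span directions independent of the others, so $\operatorname{rank}F^*$ may still fall short of $m$ --- precisely the caveat, stated just before the corollary, that the rank of $G^*$ ``cannot be guaranteed a priori.'' The hypothesis $n<m$ enters only to make the bound nontrivial, i.e., to ensure that some $r\ge1$ is genuinely required.
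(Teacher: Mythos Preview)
Your argument is correct and follows the same route as the paper's proof: bounding $\operatorname{rank}G^*$ by $\min[(r+1)n,m]$ via the factor $F^*$, concluding $(r+1)n\ge m$ is required for full rank, and then passing to the integer bound $r\ge\lceil m/n-1\rceil$. Your write-up simply spells out the intermediate step $\operatorname{rank}G^*\le\operatorname{rank}F^*\le(r+1)n$ and the ceiling identity $\lceil m/n-1\rceil=\lceil m/n\rceil-1$ more explicitly than the paper does.
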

\begin{proof}
  The rank of $G^*=(F^*)^T\,(L^*)^T\,F^*$ is at most
  $\min[(r+1)n,m]$. Since $G^*$ is a $m\times m$ matrix, in
  order to be full rank, $(r+1)n\ge m$ has to hold. Therefore,
  $r\ge\frac{m}{n}-1$, but since $r$ is an integer, its lower bound is $\lceil\frac{m}{n}-1\rceil$.
\end{proof}
\noindent
In general, as $r$ is arbitrary, one can expect that there is a value
of $r$ large enough that makes $G^*$ full rank.

Comparing the obtained adaptation law~\eqref{eq:21} with~\eqref{eq:10}
above, one can observe that the gradient of the parameters depends on
several time delayed samples of the error $e$ (as well as on their derivatives $\dot{e}$). A
possible intuition to this result comes from the observation that, if
$m>n$, the degrees of freedom of $e$ are not enough to produce a
meaningful gradient for $\alpha$, if the law~\eqref{eq:10} is employed.
However, with~\eqref{eq:21}, which depends on $e^*$ with $(r+1)n$
degrees of freedom, the gradient of $\alpha$ can have the full
dimensionality of~$m$.

\section{Experimental results}
\label{sec:experimental-results}

This section presents numerical results illustrating the theoretical
results derived above. Two classical chaotic systems were used: the
Lorenz oscillator~\cite{tucker1999lorenz}, commonly used in the
chaotic synchronization literature for numerical
simulations~\cite{kocarev96,pecora97,chen02}, and the Rössler
attractor, designed to behave similarly to the Lorenz system while
being easier to understand~\cite{rossler1976equation}.  Simultaneous
identification and synchronization is simulated, while comparing the
performance of Chen's method~\cite{chen02} with the one proposed here.
For the Chen's method we used controller~\eqref{eq:14} with the
adaptation law~\eqref{eq:12}, and for our method we used
controller~\eqref{eq:5} with the adaptation law~\eqref{eq:21}.

The Lorenz oscillator is a three-dimensional dynamical system that
behaves chaotically for a certain set of
parameters~\cite{tucker1999lorenz}.
In the form
of~(\ref{eq:1}), it can be written as
\begin{equation}
\label{eq:drive}
\left[
  \begin{array}{c}
    \dot{x}\\
    \dot{y}\\
    \dot{z}
  \end{array}
\right] = \left[
  \begin{array}{c}
    0\\
    -y-xz\\
    xy
  \end{array}
\right] + \left[
  \begin{array}{ccc}
    y-x & 0 & 0\\
    0 & x & 0\\
    0 & 0 & -z
  \end{array}
\right] \left[
  \begin{array}{c}
    \theta_1\\
    \theta_2\\
    \theta_3
  \end{array}
\right]
\end{equation}
where $x$, $y$ and $z$ are state variables and $\theta_1$, $\theta_2$
and $\theta_3$ are system parameters.  The Lorenz oscillator was
synchronized with a response system, which is specified by four
parameters. In the form of~(\ref{eq:2}), it can be written as
\begin{equation}
\label{eq:response}
\left[
  \begin{array}{c}
    \dot{u}\\
    \dot{v}\\
    \dot{w}
  \end{array}
\right] = \left[
  \begin{array}{c}
    0\\
    -v-uw\\
    uv
  \end{array}
\right] + \left[
  \begin{array}{cccc}
    v-u & 0 & 0 & 0\\
    0 & u & 0 & 0\\
    0 & 0 & -w & 1
  \end{array}
\right] \left[
  \begin{array}{c}
    \alpha_1\\
    \alpha_2\\
    \alpha_3\\
    \alpha_4
  \end{array}
\right] + \left[
  \begin{array}{c}
    u_1\\
    u_2\\
    u_3
  \end{array}
\right]
\end{equation}
where $u$, $v$ and $w$ are state variables and $\alpha_1$, $\alpha_2$,
$\alpha_3$ and $\alpha_4$ are the parameters.  Note that the rank of
the $F(x)$ matrix in~\eqref{eq:response} is at most~3, while the response
systems uses~4 parameters: $\alpha_4$ is an unnecessary parameter that
is not present in the drive~\eqref{eq:drive}, being artificially
introduced to comparing the two approaches when~$m>n$.  As it was
shown before, under these conditions Chen's method is not guaranteed
to converge, while Corollary~\ref{cor:2} requires~$r\ge 1$ for $G^*$ to be
full rank, and thus a necessary condition for convergence
(as Corollary~\ref{cor:1}).

For this simulation, the classical parameter values for the Lorenz
system were used: $[\theta_1, \theta_2, \theta_3, \theta_4]^T = [10,
28, 8/3, 0]^T$.  The initial states of the drive system and the
controlled system were arbitrarily set to $[8, 9, 10]^T$ and $[3, 4,
5]^T$, respectively.  The parameters of the response system had zero
initial condition. The $L$ and $K$ parameters were set to~$10\,I$
and~$0.1\,I$.

\begin{figure}
\centering
\includegraphics[width=0.8\linewidth]{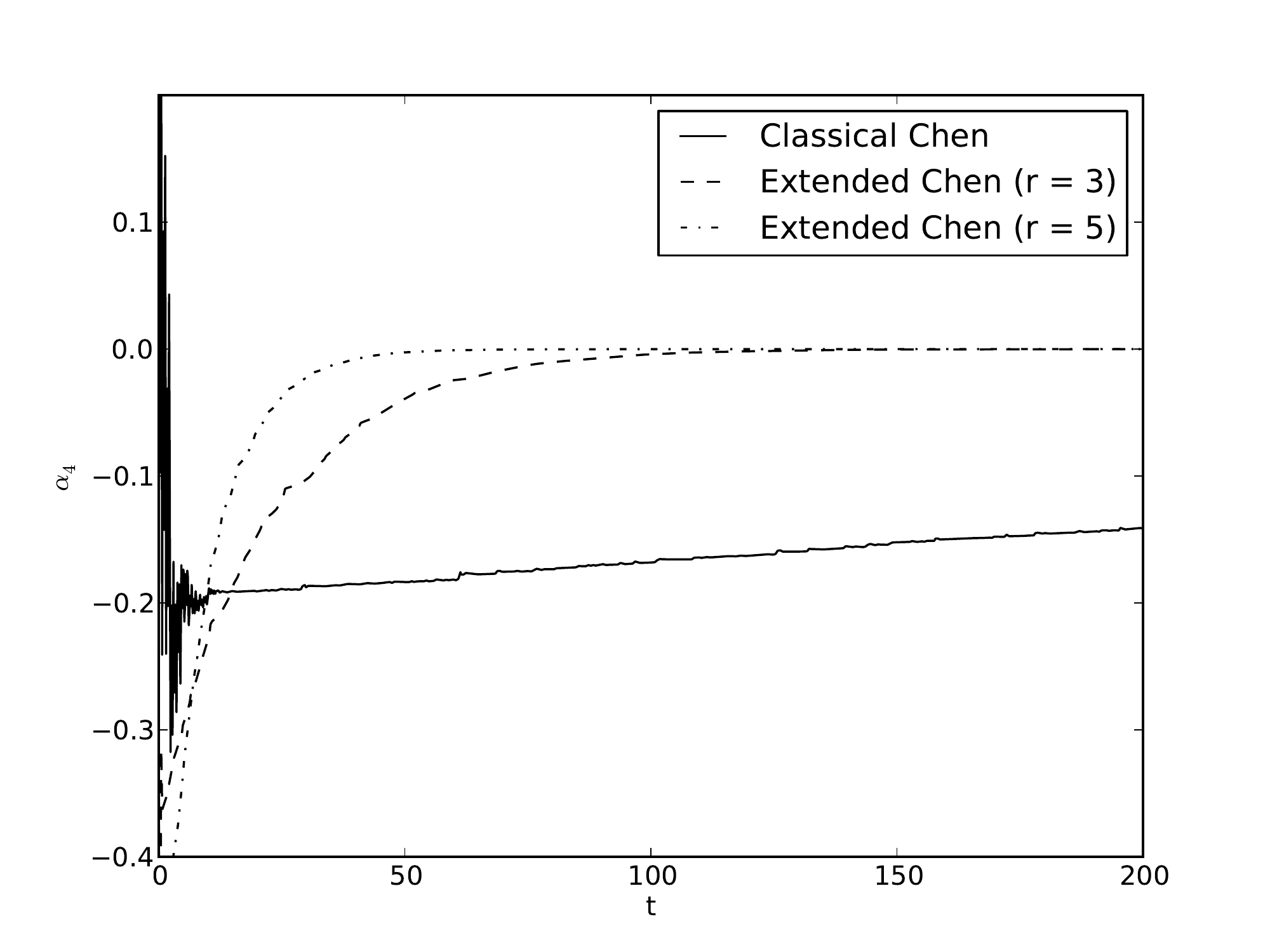}
\caption{Lorenz system: graph of parameter identification results for
  $\alpha_4$. Solid line: Chen's method~\eqref{eq:12}, dotted and
  dash-dotted lines: proposed method~\eqref{eq:21} for $r=3$ and for
  $r=5$.}
\label{fig:lorenzalpha4}
\end{figure}

\begin{figure}
\centering
\includegraphics[width=0.8\linewidth]{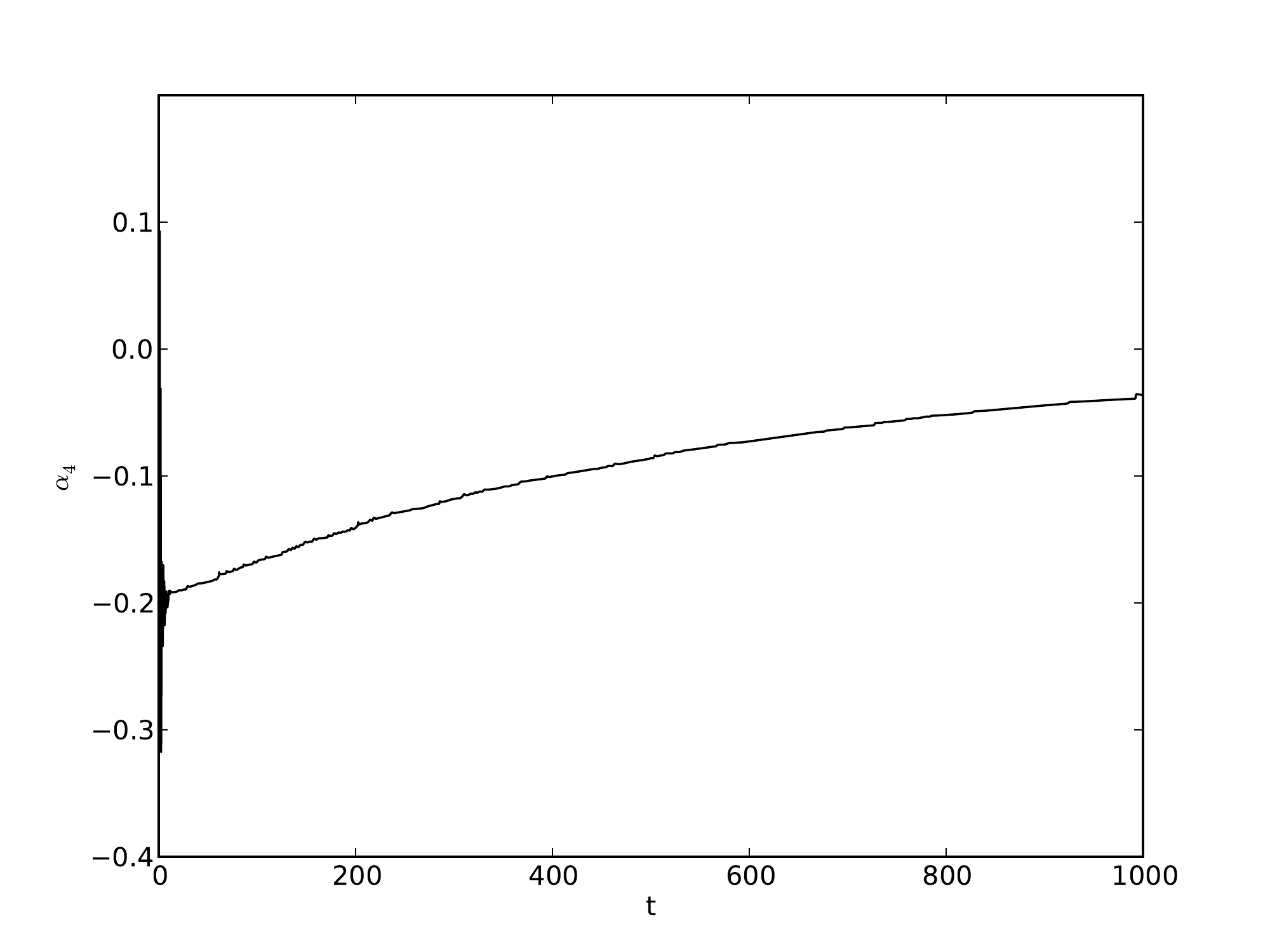}
\caption{Lorenz system: graph of parameter identification results for $\alpha_4$ using Chen's method~\eqref{eq:12}.}
\label{fig:lorenzalpha4chen1000s}
\end{figure}

\begin{figure}
\centering
\includegraphics[width=0.8\linewidth]{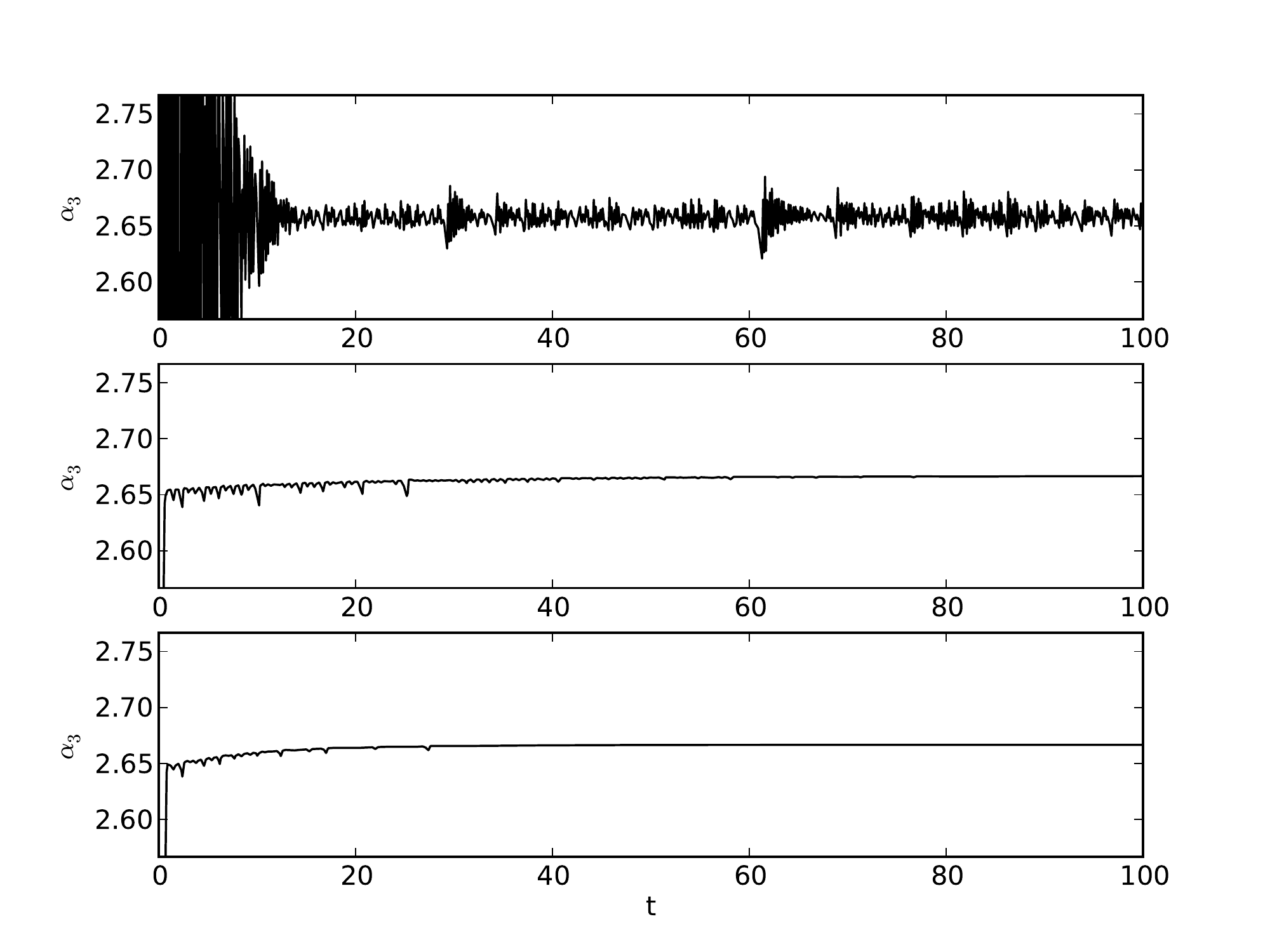}
\caption{Lorenz system: plot of parameter identification results for
  $\alpha_3$. Top plot: Chen's method~\eqref{eq:12}, middle and bottom
  plots: proposed method~\eqref{eq:21} for $r=3$ and for
  $r=5$.}
\label{fig:lorenzalpha3}
\end{figure}

\begin{figure}[h!]
\centering
\includegraphics[width=0.8\linewidth]{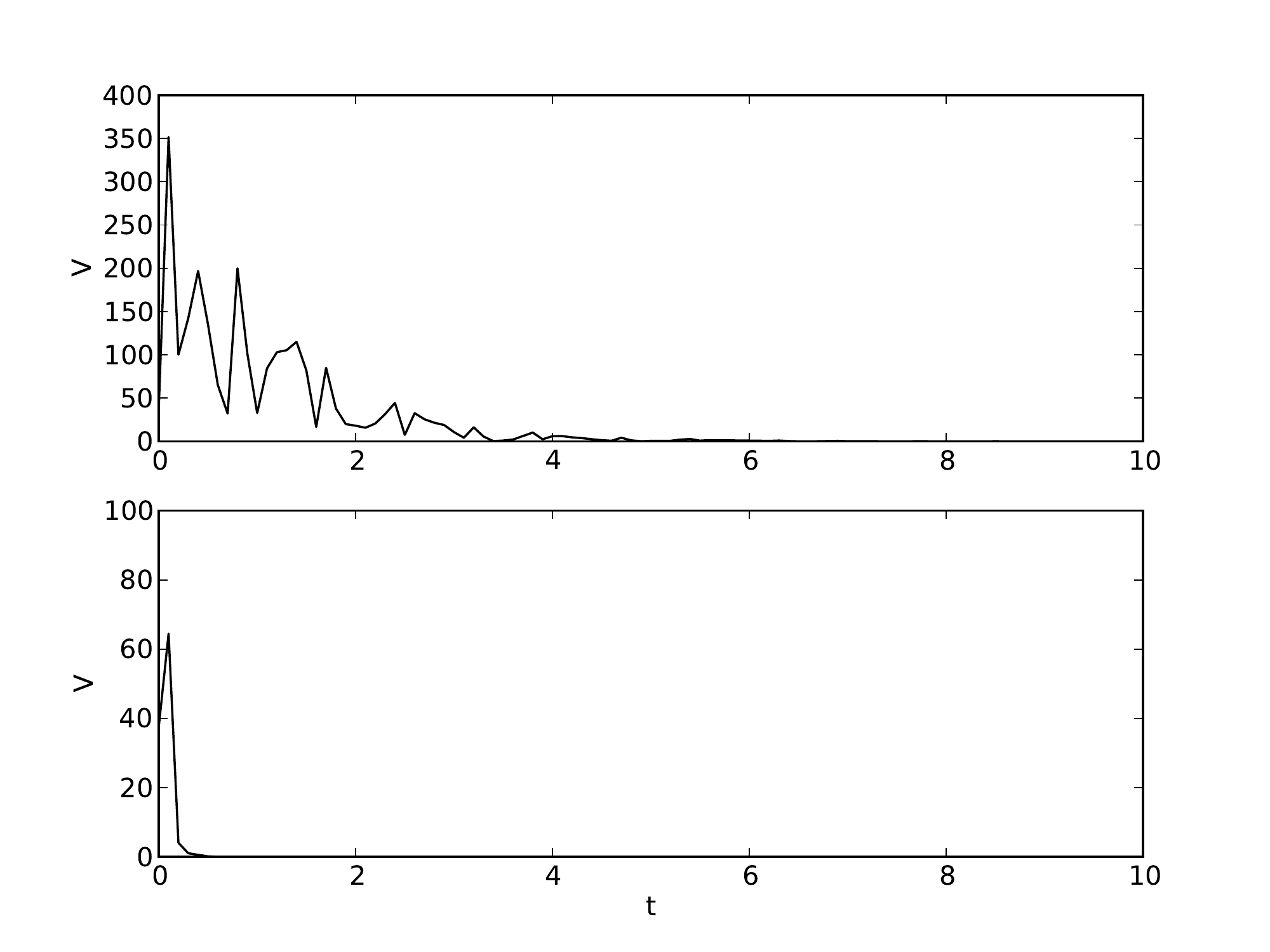}
\caption{Lorenz system: graph of Lyapunov function for synchronization
  error. Top plot: Chen's method~\eqref{eq:12}, bottom plot: proposed
  method~\eqref{eq:21} for $r=3$.}
\label{fig:lorenzv}
\end{figure}

\begin{table}[ht]
\label{tab:lorenzid}
\caption{Time to reach identification error ranges for parameter
  $\alpha_3$ (in simulation seconds)}
\centering
\begin{tabular}{|cc||r|r|r|r|}
\hline
\multicolumn{2}{|c||}{\multirow{2}{*}{Method}} & \multicolumn{4}{c|}{Identification error} \\
\cline{3-6} 
                                          & &   1\% &  0.1\% & 0.01\% & 0.001\% \\
\hline
\multicolumn{2}{|c||}{Classical Chen}     & 203.0 & 1609.8 & 3026.9 & 4640.3  \\
\hline
\multirow{5}{*}{Extended Chen} & r = 1    &   0.4 &  261.5 &  612.1 &  942.7  \\
                               & r = 2    &  10.3 &  104.8 &  226.7 &  308.5  \\
                               & r = 3    &   2.5 &   58.4 &  106.5 &  163.8  \\
                               & r = 4    &   2.5 &   31.1 &   70.9 &   94.8  \\
                               & r = 5    &   2.5 &   21.6 &   43.5 &   67.3  \\
\hline
\end{tabular}
\end{table}

Figure~\ref{fig:lorenzalpha4} shows the numerical results\footnote{All
  simulations were performed using Python together with SciPy and
  PyDDE libraries.} of parameter identification for parameter
$\alpha_4$.  Note the trend for the parameter convergence to be faster
for higher values of $r$.  Figure~\ref{fig:lorenzalpha4chen1000s}
shows the results of parameter identification for the parameter
$\alpha_4$ for Chen's method over a longer time horizon. While Chen's
method is not able to identify this parameter even after $1000$
seconds, our method allows for a significantly faster convergence
(under $200$ seconds).  Figure~\ref{fig:lorenzalpha3} shows the
results of parameter identification for $\alpha_3$.
Table~\ref{tab:lorenzid} shows the time it takes for the parameter
identification error to fall below a percentage of the real parameter
value.  Note again that the convergence is faster for higher values of
$r$.  It is interesting to note that, for instance, during the last
$20$ seconds of the simulation, the coefficient of
variation\footnote{The coefficient of variation is defined as the
  ratio $\sigma/|\mu|$, where $\sigma$ is the standard deviation and
  $\mu$ the sample mean.} of the root mean square error is of
$4.42\times~10^{-3}$ for Chen's method, while for our method it is of
$1.21\times~10^{-4}$ ($r = 3$) and $1.36\times~10^{-6}$ ($r = 5$).
Chen's method is not able to correctly identify this parameter, with
its value oscillating around the true value of $\theta_3$.  Our
method, however, allows for a lower variance in the parameter
identification.  Figure~\ref{fig:lorenzv} shows the synchronization
error, as measured by the Lyapunov function~(\ref{eq:4}).  Both Chen's
method and ours are able to drive the synchronization error to zero.
Our method, however, shows near-instantaneous convergence.  Also, the
magnitude of the error is reduced by comparison to Chen's method.

Similar results were obtained using the Rössler attractor.  In the
form of~(\ref{eq:1}), it can be written as
\begin{equation}
\label{eq:rossler_drive}
\left[
  \begin{array}{c}
    \dot{x}\\
    \dot{y}\\
    \dot{z}
  \end{array}
\right] = \left[
  \begin{array}{c}
    -y-z\\
    x\\
    xz
  \end{array}
\right] + \left[
  \begin{array}{ccc}
    0 & 0 & 0\\
    y & 0 & 0\\
    0 & 1 & -z
  \end{array}
\right] \left[
  \begin{array}{c}
    \theta_1\\
    \theta_2\\
    \theta_3
  \end{array}
\right]
\end{equation}
where $x$, $y$ and $z$ are state variables and $\theta_1$, $\theta_2$ and $\theta_3$ are system parameters.
The Rössler system was synchronized with a response system specified by four parameters. In the form of~(\ref{eq:2}), it can be written as
\begin{equation}
\label{eq:rossler_response}
\left[
  \begin{array}{c}
    \dot{u}\\
    \dot{v}\\
    \dot{w}
  \end{array}
\right] = \left[
  \begin{array}{c}
    -v-w\\
    u\\
    uw
  \end{array}
\right] + \left[
  \begin{array}{cccc}
    0 & 0 & 0 & 0\\
    v & 0 & 0 & 1\\
    0 & 1 & -w & 0
  \end{array}
\right] \left[
  \begin{array}{c}
    \alpha_1\\
    \alpha_2\\
    \alpha_3\\
    \alpha_4
  \end{array}
\right] + \left[
  \begin{array}{c}
    u_1\\
    u_2\\
    u_3
  \end{array}
\right]
\end{equation}
where $u$, $v$ and $w$ are state variables and $\alpha_1$, $\alpha_2$, $\alpha_3$ and $\alpha_4$ are the parameters.
Again, the rank of the $F(x)$ matrix is~3, while the number of parameters is~4.

For this simulation, the commonly used parameter values for the
Rössler system were used: 
$[\theta_1, \theta_2, \theta_3, \theta_4]^T=[0.1, 0.1, 14, 0]^T$.
The initial states of the drive
system and the controlled system were arbitrarily set to $[8, 9,
10]^T$ and $[3, 4, 5]^T$, respectively.  The parameters of the
response system had zero initial condition.  The $L$ and $K$
parameters were set to~$10\,I$ and~$0.1\,I$ (for $I$ being the
identity matrix with appropriate dimensions).

\begin{figure}
\centering
\includegraphics[width=0.8\linewidth]{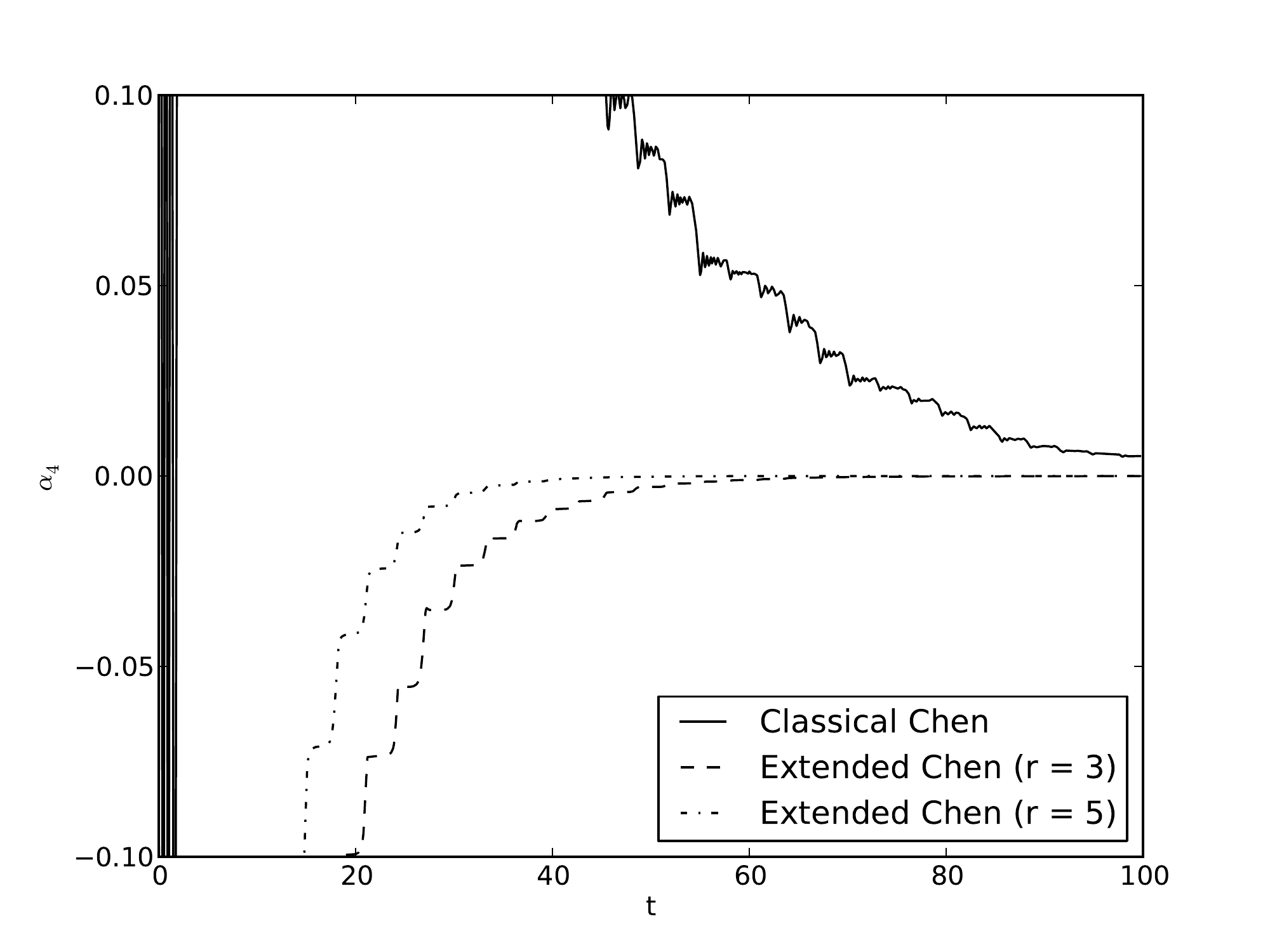}
\caption{Rössler system: graph of parameter identification results for $\alpha_4$. Solid line: Classical Chen, dotted line: Extended Chen ($r = 3$, dash-dotted line: Extended Chen ($r = 5$).}
\label{fig:rossleralpha4}
\end{figure}

\begin{figure}
\centering
\includegraphics[width=0.8\linewidth]{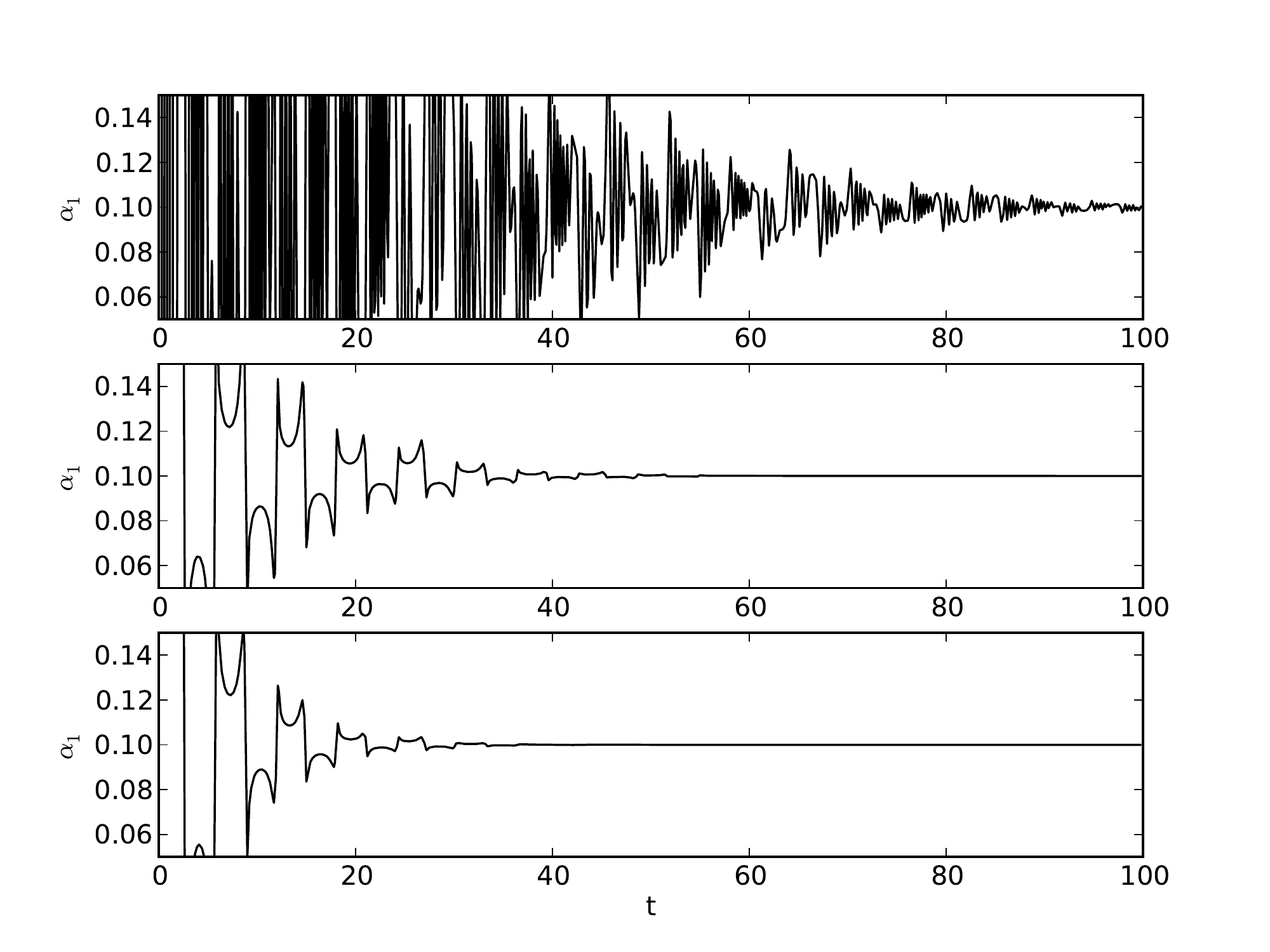}
\caption{Rössler system: plot of parameter identification results for $\alpha_3$. Top plot: Classical Chen, middle plot: Extended Chen ($r = 3$), bottom plot: Extended Chen ($r = 5$).}
\label{fig:rossleralpha1}
\end{figure}

\begin{figure}
\centering
\includegraphics[width=0.8\linewidth]{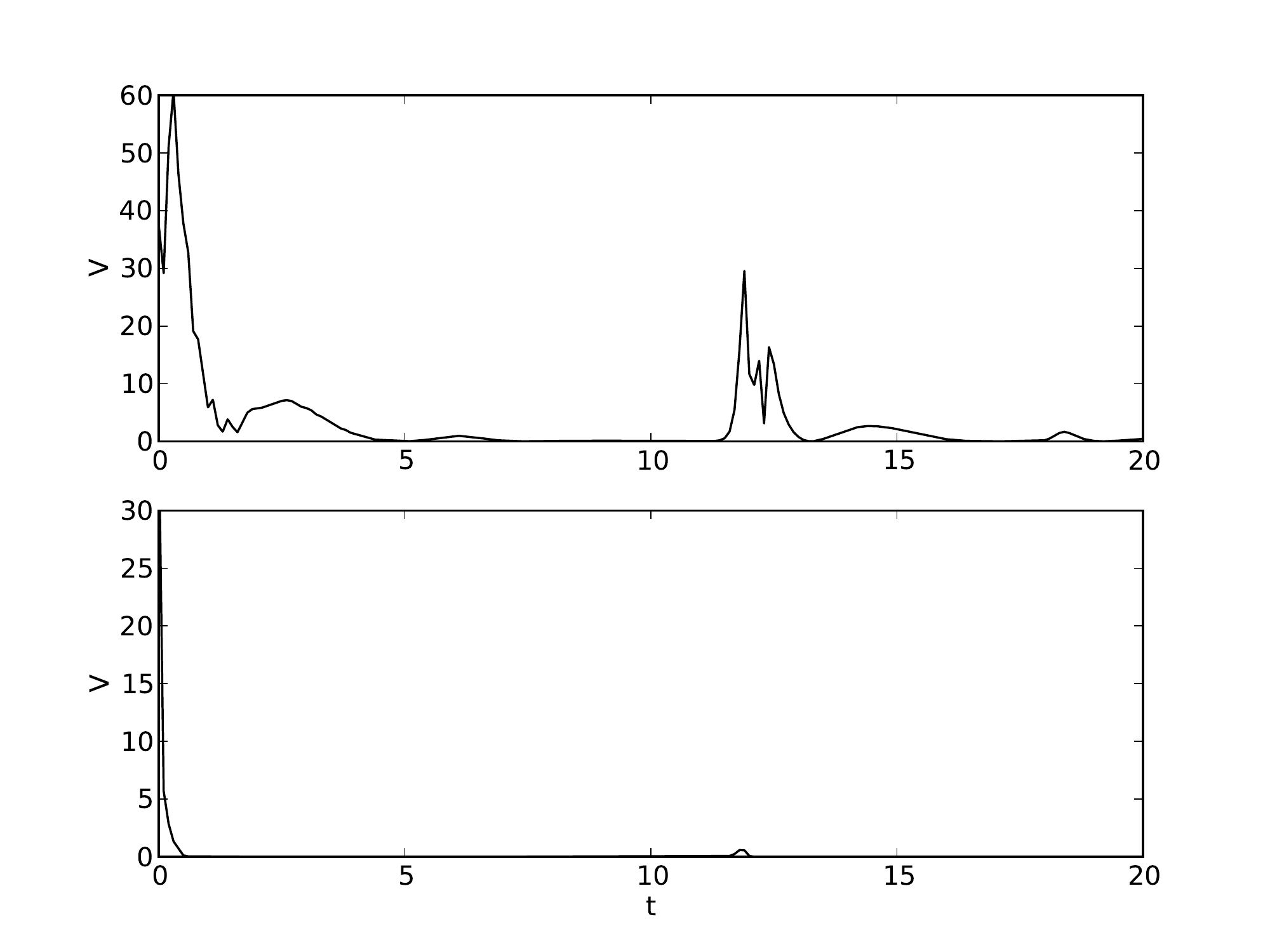}
\caption{Rössler system: graph of Lyapunov function for synchronization error. Top plot: Classical Chen, bottom plot: Extended Chen ($r = 3$).}
\label{fig:rosslerv}
\end{figure}

The improved convergence performance of the proposed method over
Chen's is clearly visible in Figure~\ref{fig:rossleralpha4}, while
parameter convergence is faster for higher values of $r$.
Figure~\ref{fig:rossleralpha1} shows the results of parameter
identification for $\alpha_1$, which, together with $\alpha_4$,
specifies the evolution of the state variable $v$.  During the last
$20$ seconds of the experiment, the coefficient of variation of the
root mean square error is of $2.96\times~10^{-2}$ for Chen's method,
while for our method it is of $6.11\times~10^{-5}$ ($r = 3$) and
$3.32\times~10^{-7}$ ($r = 5$).  Chen's method cannot identify this
parameter correctly, with its value oscillating around the true value
of $\theta_3$.  On the other hand, our method allows for stable
parameter identification.  Again, convergence is faster for greater
values of $r$.  Finally, Figure~\ref{fig:rosslerv} shows the
synchronization error, as measured by the Lyapunov
function~(\ref{eq:4}).  Both methods drive the synchronization error
to zero, while our method shows a significantly faster convergence.
Also, the magnitude of the error is reduced by comparison to Chen's
method.

\section{Conclusions}
\label{sec:conclusions}

Building upon previous work in simultaneous parameters identification
and synchronization of dynamical systems, this paper proposes an
improved method that addresses limitations of the previously published
Chen's method \cite{chen02}. The proposed method is capable of
handling arbitrarily large parameter space dimensions. Convergence
proof of the method is provided, using the Lyapunov's second
method. Numerical results illustrate the proposed method, comparing it
to Chen's and showing better performance in terms of both faster and
less noisy parameter identification.

\bibliography{biblio}
\bibliographystyle{plain}

\end{document}